\begin{document}

\title{Optimal Power Control and Rate Adaptation for Ultra-Reliable M2M Control Applications}

\author{Bakhtiyar Farayev, Yalcin Sadi, \textit{Student Member}, \textit{IEEE}, and Sinem Coleri Ergen, \textit{Member}, \textit{IEEE}
\thanks{Bakhtiyar Farayev, Yalcin Sadi and Sinem Coleri Ergen are with the department of Electrical and Electronics Engineering, Koc University, Istanbul, Turkey, e-mail:~\texttt{bfarayev13, ysadi, sergen@ku.edu.tr}. Authors acknowledge the support of The Scientific and Technological Research Council of Turkey Grant \#113E233 and Turk Telekom Collaborative Research Award \#11315-10. Sinem Coleri Ergen also acknowledges support from Bilim Akademisi - The Science Academy, Turkey under the BAGEP program, and the Turkish Academy of Sciences (TUBA) within the Young Scientist Award Program (GEBIP).}
}

\maketitle

\newcommand{\diag}{\textrm{diag}}
\newtheorem{theorem}{\textbf{Theorem}}
\newtheorem{assumption}{\textbf{Assumption}}
\newtheorem{lemma}{\textbf{Lemma}}
\newtheorem{corollary}{\textbf{Corollary}}
\newtheorem{proposition}{\textbf{Proposition}}
\newtheorem{remark}{\textbf{Remark}}
\newtheorem{claim}{\textbf{Claim}}
\newtheorem{definition}{\textbf{Definition}}
\newtheorem{case}{\textbf{Case}}
\newtheorem{observation}{Observation}
\newenvironment{Proof}{ \emph{Proof:}}{$\Box$}
\newenvironment{example}{ \textbf{Example} \footnotesize}{\newline}
\def\E{\,\mathds{E}\,}
\def\Blue#1{\textcolor{blue}{#1}}
\def\Red#1{\textcolor{red}{#1}}
\def\Green#1{\textcolor{green}{#1}}

\IEEEpeerreviewmaketitle

\begin{abstract}
The main challenge of ultra-reliable machine-to-machine (M2M) control applications is to meet the stringent timing and reliability requirements of control systems, despite the adverse
properties of wireless communication for delay and packet errors, and limited battery resources of the sensor nodes. Since the
transmission delay and energy consumption of a sensor node are determined by the transmission power
and rate of that sensor node and the concurrently transmitting nodes, the transmission schedule should be optimized
jointly with the transmission power and rate of the sensor nodes. Previously, it has been shown that the optimization of power control and rate adaptation for each node
subset can be separately formulated, solved and then used in the scheduling algorithm in the optimal solution of the joint optimization of power control, rate adaptation and scheduling problem. However, the power control and rate adaptation problem has been only formulated and solved for continuous rate transmission model, in which Shannon's capacity
formulation for an Additive White Gaussian Noise (AWGN) wireless channel is used in the calculation
of the maximum achievable rate as a function of Signal-to-Interference-plus-Noise Ratio (SINR). In this paper, we formulate the power control and rate adaptation problem with the objective of minimizing
the time required for the concurrent transmission of a set of sensor nodes while satisfying their
transmission delay, reliability and energy consumption requirements based on the more realistic
discrete rate transmission model, in which only a finite set of transmit rates are supported. We propose a polynomial time
algorithm to solve this problem and prove the optimality of the proposed algorithm. We then combine it with the previously proposed scheduling algorithms and demonstrate its close to optimal performance via extensive simulations. 

\end{abstract}

\begin{keywords}
\textbf{ultra-reliable communication, machine-to-machine communication, power control, rate adaptation, energy efficiency, delay constraint.}
\end{keywords}

\section{Introduction}

M2M is a new communication
paradigm that aims to enable the communication between the
devices with no or minimal human interaction. The wireless communication between sensor nodes and controllers in ultra-reliable M2M control applications reduces the cost of their installation, maintenance and upgrade, together with the complexity of the overall system, compared to their wired equivalent~\cite{Joao07}. 
The main challenge of ultra-reliable M2M control applications is the design of a robust scheduling algorithm that satisfies the packet generation period, transmission delay and reliability requirement of the sensors required to maintain a certain control system performance \cite{networkedcontrol-scheduling-12, networkedcontrol-scheduling-07} despite the adverse properties of wireless communication for non-zero packet error probability and non-zero delay at all times. 
Since the transmission delay and energy consumption of a sensor node are determined by the transmission power and rate of that sensor node and by the same of the other nodes that are scheduled for concurrently transmission, the schedule should be optimized jointly with the transmission power and rate of the sensor nodes.

Ultra-reliable communication (URC) is defined as a high-level communication service which is available almost $100\%$ of the time. \cite{ultra_01, ultra_02} discuss the relation between control information and actual data in URC based systems. \cite{ultra_03} proposes a URC based system based on the prediction of future reliable channels. None of these works however consider the periodic packet transmission of the sensor nodes in control applications.

Joint optimization of power control, rate adaptation and scheduling has been studied for both general purpose wireless networks \cite{fu2006optimal, ozel2012optimal, chen2008energy, zafer2009calculus} and wireless networked control systems (WNCSs) \cite{intravehicular, son_makale}. This optimization problem has been formulated and solved for general purpose networks with the goal of satisfying either a single deadline for all packets  \cite{fu2006optimal, ozel2012optimal} or individual deadlines for each packet \cite{chen2008energy, zafer2009calculus}. In addition to the delay constraints, the formulation of this joint optimization problem for WNCSs has included the uniform distribution of packet transmissions over time to satisfy the robustness requirement of the control systems, and the periodic data generation, reliability and energy consumption requirements of the sensor nodes \cite{intravehicular, son_makale}. It has been shown in \cite{son_makale} that the optimization of power control and rate adaptation for each node
subset can be separately formulated, solved and then used in the scheduling algorithm in the optimal solution of the joint optimization of power control, rate adaptation and scheduling problem. However, the power control and rate adaptation problem has been formulated for continuous rate transmission model \cite{intravehicular, son_makale}, in which Shannon's channel capacity
formulation for an AWGN wireless channel is used in the calculation
of the maximum achievable rate as a function of SINR. In practical communication systems,
however, only a finite set of discrete transmission rate values are supported, as suggested by the practical
realization of multiple data rates in \cite{goldsmith}.

The goal of this paper is to study the power control and rate adaptation problem with the objective of minimizing
the time required for the concurrent transmission of a set of sensor nodes while satisfying their
transmission delay, reliability and energy consumption requirements based on the more realistic
discrete rate transmission model. We propose a polynomial time
algorithm for the power control and rate adaptation problem and provide the proof
of the optimality of the proposed algorithm. We then demonstrate the performance of the proposed algorithm within the previously proposed scheduling framework for ultra-reliable M2M control via extensive simulations.

The rest of this paper is organized as follows. Section \ref{system_model} describes the system model and assumptions used throughout this paper. Section \ref{power_rate} formulates the power control and rate adaptation problem, proposes a polynomial time algorithm and proves the optimality of the proposed algorithm. Performance results of the proposed algorithm are presented in Section \ref{simulationresults}. The final review of our work and the ideas for future research are given in Section \ref{sec:conclusion}.

\section{System Model and Assumptions} \label{system_model}
\begin{enumerate}
\item The network contains sensor nodes, controllers and actuators. 
The controllers can receive packets only from one sensor node at a time. When a controller receives the sensor data, it performs the new control computation and sends the output to the actuator. We assume that the actuators always receive the controller packets successfully.
\item We assume that sensor nodes transmit their data packets to their corresponding controllers in one hop.
The extension for the multi-hop packet transmission of the packets from the sensor nodes to the controllers is out of scope of this paper and subject to future work.
\item Central controller is selected from the existing controllers and responsible for network synchronization, resource allocation and scheduling of the sensor nodes. Central controller is assumed to have complete network topology, however the method for topology learning is out of scope of this paper. 
\item TDMA is used as a Medium Access Control (MAC) protocol due to its superior delay and energy performance for the networks with predetermined topology and data generation pattern \cite{pedamacs}. The time is divided into subframes, which are further divided into beacon and time slots. The central controller sends the beacon at the beginning of each subframe and includes the scheduling and resource allocation information of each scheduled node within the subframe. If there exist sensor nodes outside the transmission range of the central controller then the central controller uses the corresponding controllers to relay the beacon to these nodes. 
\item We only consider the energy consumption for the data packet transmission of each sensor node since the energy consumption in active mode is much larger than that in sleep and transient modes, and the energy consumption in the reception of the beacon packets is not subject to optimization. We define the maximum allowed per packet energy consumption to achieve a certain lifetime by $e_l$ for sensor node $l$.
\item The transmission power can take any value below $P_{max}$. This continuous power assumption is frequently used in previous scheduling algorithm designs since practical radios support a large number of discrete power levels, resulting in high approximation accuracy with lower complexity.
\item We assume that the channel is perfectly known and constant over the scheduling frame, i.e. channel gain $g_{lk}$ from the sensor node $l$ to the controller corresponding to sensor node $k$ does not change. Considering the channel variation of the links within the scheduling frame so fast fading wireless channels is out of scope of this paper and subject to future work.
\item We use the \textit{discrete rate} transmission model in which a finite number of transmission rates $r = (r^1, r^2, ..., r^Q)$ corresponding to a finite number of SINR levels $\gamma = (\gamma^1, \gamma^2, ..., \gamma^Q)$ are determined such that node $l$ can transmit at rate $r^q$ if the SINR achieved at the corresponding link, i.e. $\gamma_l = \frac{p_lg_{lk}}{N_0 + \sum_{l\neq k} p_lg_{lk}}$, is greater than or equal to $\gamma^q$, where $p_l$ is the transmit power of node $l$, $N_0$ is the background noise power and $Q$ is the number of discrete transmission rate levels. Each rate level $r^i$, $i \in [1,Q]$ is determined based on SINR-rate mapping function $f(\gamma)$, i.e. $r^i = f(\gamma^i)$, $i \in [1,Q]$, which satisfies the constraint given by $\frac{\partial^2 f(\gamma)}{\partial\gamma^2} \leq 0$. This constraint is satisfied for Shannon's capacity formulation which is commonly used in AWGN channels.
\item The scheduling algorithms proposed for WNCS in \cite{son_makale} has been adopted to be jointly used with the proposed power control and rate adaptation algorithm. The goal of the scheduling algorithm is to provide maximum robustness accommodating packet losses, topology and sensor requirement changes. 
The subframe and frame length are defined as the minimum and maximum packet generation period of all sensor nodes, and denoted by $S$ and $F$, respectively. The \emph{total active length of a subframe} $m$, denoted by $a_m$, is defined as the sum of the length of the time slots allocated to subframe $m$. The robustness of the schedule requires distribution of data transmissions as uniformly as possible over the scheduling frame. The objective is therefore quantified as minimizing the maximum total active length of all the subframes in a frame. 

To illustrate the scheduling, let us assume the example network consist of $4$ sensor nodes. Sensor node $1$ has packet generation period of $1$ ms and packet transmission time of $t_1=0.15$ ms. Sensor nodes $2$, $3$ and $4$ have packet generation period of $2$ ms and their packet transmissions require time slot allocations of lengths $t_2=0.20$ ms, $t_3=0.25$ ms and $t_4=0.30$ ms, respectively, when no concurrent transmissions are allowed. Let us assume that the concurrent transmission of nodes $2$ and $3$ takes $0.30$ ms, i.e. $t_{2,3} = 0.30$ ms. 
Here, the subframe length $S=1$ ms and the frame length $F=2$ ms, containing $2$ subframes.  
Fig. \ref{fig:objective} shows the robust schedule that minimizes the maximum total active length of the subframes in the presence of the concurrent transmissions. The maximum total active length is equal to $max(a_1,a_2) = max(t_1+t_{2,3}, t_1+t_4) = 0.45$ ms. 

\begin{figure}[H]
  \begin{center} 
    \includegraphics*[scale=0.50,viewport=30 30 490 120]{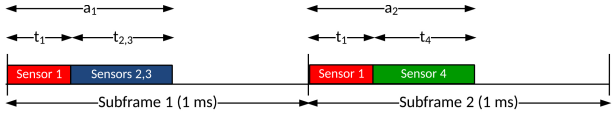}
  \end{center}  
  \caption{\small Illustration of the robust scheduling algorithm for the network of 4 nodes.}  
  \label{fig:objective}
\end{figure}

The scheduling algorithms proposed in \cite{son_makale} are based on first assigning the nodes to the subframes through the node assignment algorithm and then determining the subsets of nodes that concurrently transmit through the concurrency allocation algorithm. The first algorithm adopts Sorted Node Assignment (SNA) and Minimum Length Allocation (MLA) hence is denoted by SNA-MLA algorithm, whereas the second algorithm adopts SNA and Maximum Utility Allocation (MUA) therefore is called SNA-MUA. SNA is based on assigning each node to the subframe with the minimum total active length giving priority to the nodes with higher transmission times and considering the nodes connected to different controllers separately. MLA algorithm enumerates all feasible subsets of the nodes with the same packet generation period assigned to the same subframe, and then selects some of these feasible subsets such that every node is covered at least once and the total time allocated is minimized. MUA algorithm, on the other hand, iteratively constructs the concurrently transmitting node subsets from the nodes with the same packet generation period assigned to the same subframe by including the node that maximizes the utility, which is defined as the decrease in the transmission time of a set of sensor nodes when they transmit concurrently, in the subset. The SNA algorithm uses power control and rate adaptation in determining the transmission time of each sensor node while satisfying its delay and energy requirement whereas the MLA and MUA algorithms use power control and rate adaptation algorithm in deciding whether the concurrent transmission of a subset of sensor nodes while satisfying their transmission delay and energy requirements is feasible and their transmission time if feasible. 

\end{enumerate}

\section{Power Control and Rate Adaptation Problem} \label{power_rate}
\subsection{Problem Formulation}

The power control and rate adaptation problem aims to minimize the time required for the concurrent transmission of the sensor nodes, while satisfying their transmission delay and energy consumption requirements. As described in detail in Section \ref{system_model}, this problem needs to be solved to provide the transmission time of a single node and subset of nodes in the node assignment and concurrency allocation parts, respectively, of the SNA-MLA and SNA-MUA algorithms. The formulation of the power control and rate adaptation problem for the sensor nodes in the set $S$ is given as

\begin{subequations} \small

\textbf{minimize}
\begin{equation} \label{eq:t}
t
\end{equation}

\textbf{subject to}
\begin{equation} \label{eq:tileq}
t_l \leq t, \quad l\in S
\end{equation}

\begin{equation} \label{eq:di}
t_l \leq d_l, \quad l\in S 
\end{equation}

\begin{equation} \label{eq:ei}
t_lp_l \leq e_l, \quad l\in S
\end{equation}

\begin{equation} \label{eq:pmax}
p_l \leq p_{max}, \quad l\in S
\end{equation}

\begin{equation} \label{eq:ti}
t_l = \frac{R_l}{x_l}, \quad l\in S
\end{equation}

\begin{equation} \label{eq:pow}
p_l g_{ll} - k_{lq}\gamma^q (N_0 + \sum_{u\neq l}p_ug_{ul}) \geq 0, \quad  l\in S, q\in[1,Q] 
\end{equation}

\begin{equation} \label{eq:k}
\sum_{q=1}^Q k_{lq} = 1, \quad l \in S 
\end{equation}

\begin{equation}  \label{eq:r}
x_l = \sum_{q=1}^{Q}k_{lq} r^q, \quad l \in S
\end{equation}

\textbf{variables}
\begin{center}
$t>0$, $p_l \geq 0$, $k_{lq} \in \{0,1\}$, $l\in S$, $q\in[1,Q]$
\end{center}

\end{subequations}

where $t$ is the time duration required for the concurrent transmission of the nodes in the set $S$; $t_l$ and $x_l$ represent the transmission time and the transmission rate of node $l$, respectively; $R_l$ is the length of the packet of sensor node $l$; $k_{lq}$ is the indicator of the assignment of sensor node $l$ to the $q$-th rate level; $d_l$ is the delay requirement of sensor node $l$ to achieve a certain control performance. Eqs. (\ref{eq:t}) and (\ref{eq:tileq}) together represent the objective of minimizing the time slot length allocated for the concurrent transmission of the nodes, i.e. minimizing maximum transmission time of the nodes. Eqs. (\ref{eq:di}) and (\ref{eq:ei}) represent the delay and energy constraints of the nodes, respectively. Eq. (\ref{eq:pmax}) states that the transmit power of the $l$-th node cannot be greater than the maximum allowed power. Finally, Eqs. (\ref{eq:ti}) - (\ref{eq:r}) represent the computation of the transmission time of node $l$ by assigning one of the discrete transmission rates in the set $r$ that satisfy the corresponding SINR requirement.

Next, we propose a polynomial time algorithm for the solution of this optimization problem and give a detailed proof of the optimality of the proposed algorithm.

%

\subsection{Proposed Polynomial Algorithm}

We propose Longest Transmission Time First (LTTF) algorithm to determine the optimal power and rate allocation for the concurrent transmission of the nodes in the set $S$ in polynomial time. The algorithm is based on the initialization with the vector of the lowest possible rates satisfying the delay constraints of the nodes and then checking the feasibility of the lowest possible number of rate allocation vectors for satisfying energy, maximum power and SINR constraints. 

Let us enumerate the links in the set $S$ from $1$ to $|S|$. Let $\textbf{r}$ be a vector of the transmission rates of the links in the set $S$.  Determining the feasibility of a rate vector $\textbf{r}$ requires checking the existence of a power vector corresponding to $\textbf{r}$, delay, energy and maximum power constraints. Since the rates of the links are predetermined by the corresponding elements of the vector $\textbf{r}$, the corresponding SINR threshold levels are predetermined. Then the existence of a transmit power vector that satisfies the constraint
\begin{equation}
\frac{p_ig_{ii}}{N_0 + \sum_{j\neq i} p_jg_{ji}} > \gamma_i
\end{equation} 
for every node $i$, where $\gamma_i$ is the SINR level corresponding to $r_i$, can be found by testing Perron-Frobenius conditions \cite{PF}. Perron-Frobenius conditions determine whether there exist a power vector satisfying these SINR constraints, and if so, the component-wise minimum power vector $\textbf{p}^{min}$. $\textbf{p}^{min}$ is then tested for delay, energy and maximum power constraints given by Eqs. (\ref{eq:di}), (\ref{eq:ei}) and (\ref{eq:pmax}), respectively.

Let $t$ denote the time duration required for the concurrent transmission of the nodes in the set $S$. The optimal values of $\textbf{r}$ and $t$ are denoted by $\textbf{r}^{opt}$  and $t^{opt}$, respectively. The $i$-th element of the vector $\textbf{r}^{opt}$ corresponding to the optimal rate of node $i$, denoted by $r^{opt}_i$, is initialized to the minimum rate level that satisfies the delay constraint of node $i$, denoted by $r_i^{th}$, given by Eq. (\ref{eq:di}) for all $i \in [1, |S|]$, whereas $t^{opt}$ is initialized to infinity (Lines $1-2$). If this initial $\textbf{r}^{opt}$ is not feasible, then the algorithm returns $ t^{opt} = \infty$, meaning that there does not exist a feasible solution (Lines $3$ and $15$). Otherwise, the algorithm calculates the time duration of the time slot corresponding to the rate vector $\textbf{r}^{opt}$ by determining the maximum transmission time of the nodes in the set $S$ (Lines $4-5$). Since the node with the maximum transmission time denoted by $j$ determines the time slot length, the length of the time slot can only decrease by increasing its transmission rate. Therefore, the algorithm continues by increasing the rate of the $j$-th node by one level if it is less than the maximum possible rate $r^Q$ (Lines $6-7$), and then checking the feasibility of the resulting $\textbf{r}^{opt}$ vector (Line $3$). The algorithm stops if either the $j$-th node has rate $r^Q$, meaning that it is not possible to decrease the time slot length any further (Lines $8-9$), or the resulting $\textbf{r}^{opt}$ vector is not feasible. 

 The complexity of the LTTF algorithm is $O(Q|S|^4$) since the algorithm checks at most $Q|S|$ transmission rate vectors for feasibility and the complexity of the feasibility check is dominated by the complexity of testing Perron-Frobenius conditions of complexity $O(|S|^3)$ \cite{PF}.
 
\begin{algorithm}
\caption{Longest Transmission Time First (LTTF) algorithm}
\label{alg1}
\begin{algorithmic}[1]

\REQUIRE Node set $S$ considered for concurrent transmission
\ENSURE Optimal rate values for the nodes in the set $S$ if their concurrent transmission is feasible 
\STATE $ t^{opt} = \infty;$
\STATE $\textbf{r}^{opt} = (r_1^{th}, r_2^{th}, ... , r_{|S|}^{th})$, \\where $r_i^{th} = min_{q \in [1,Q]}\{r^q| \frac{R_i}{r^q} \leq d_i\}$;
\WHILE {$\textbf{r}^{opt}$ is feasible}
\STATE $j = argmax_{i \in [1,|S|]}\frac{R_i}{r_i^{opt}}$;
\STATE $t^{opt} = \frac{R_j}{r_j^{opt}}$;
\IF {$r_j^{opt} < r^Q$}
\STATE increase $r_j^{opt}$ by one level;
\ELSE
\STATE break;
\ENDIF
\ENDWHILE
\IF {$t^{opt} < \infty$ and $\textbf{r}^{opt}$ is not feasible}
\STATE decrease $r_j^{opt}$ by one level;
\ENDIF
\STATE \textbf{return} $r^{opt}$, $t^{opt}$
\end{algorithmic}
\end{algorithm}

\subsection{Optimality of Proposed Algorithm}

We now prove the optimality of the proposed algorithm following the statement of the following three lemmas. 

Let $\textbf{r}=(r_1, r_2, ..., r_{|S|})$ be a vector of the transmission rates of the links in the set $S$ and $t$ be the length of the time slot corresponding to the rate vector $\textbf{r}$, such that $t = max_{i\in[1,|S|]}\frac{R_i}{r_i}$, if $\textbf{r}$ is feasible. Denote the rate index vector $\textbf{r}^d = (r_1^d, r_2^d, ..., r_{|S|}^d)$, such that for at least one element $j$, $r_j^d < r_j$, while $r_{i}^d \leq r_{i}$ for  all $i\in[1,|S|]$, by the descendant of $\textbf{r}$.

\label{lemma3}
\begin{lemma}
Let both $\textbf{r}$ and $\textbf{r}^d$ satisfy the delay constraint given in Eq. (\ref{eq:di}). If $\textbf{r}^d$ is infeasible, then $\textbf{r}$ is infeasible.
\end{lemma}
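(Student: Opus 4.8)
The plan is to prove the contrapositive: if $\textbf{r}$ is feasible, then every descendant $\textbf{r}^d$ that satisfies (\ref{eq:di}) is feasible too. Recall from the feasibility test described in Section~\ref{power_rate} that a rate vector is feasible precisely when the Perron--Frobenius conditions hold for the induced SINR thresholds and the resulting component-wise minimum power vector $\textbf{p}^{min}$ satisfies (\ref{eq:di})--(\ref{eq:pmax}); testing $\textbf{p}^{min}$ is sufficient because any power vector meeting the SINR thresholds dominates it component-wise, and the energy and maximum-power constraints are nondecreasing in each $p_l$ while the delay constraint does not involve the powers. Since $\textbf{r}^d\le\textbf{r}$ component-wise and $f$ is increasing, the induced SINR thresholds satisfy $\gamma_l^d\le\gamma_l$ for every $l$.

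Next I would use the well-known monotonicity of the minimum power vector in the SINR targets. Writing $T_l(\textbf{p})=\gamma_l(N_0+\sum_{u\ne l}p_ug_{ul})/g_{ll}$ for the standard interference update and $T_l^d$ for its analogue with $\gamma_l^d$ in place of $\gamma_l$, we have $T^d(\textbf{p})\le T(\textbf{p})$ for every $\textbf{p}\ge\textbf{0}$; iterating each map from $\textbf{0}$ produces monotonically increasing sequences, and from this it follows that the Perron--Frobenius condition for $\textbf{r}$ implies the one for $\textbf{r}^d$ and that $\textbf{p}^{min}(\textbf{r}^d)\le\textbf{p}^{min}(\textbf{r})$ component-wise. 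The maximum-power constraint for $\textbf{r}^d$ is then immediate, and its delay constraint holds by hypothesis, so the only thing left to verify is the energy constraint.

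The energy constraint is the crux and the main obstacle, because lowering $r_l$ lengthens the transmission time $t_l=R_l/r_l$ at the same time as it lowers $p_l^{min}$, so I must show the power saving dominates the time increase; this is the only constraint that cannot be dispatched by plain monotonicity. Using that $\textbf{p}^{min}(\textbf{r})$ is the fixed point of $T$ together with $\textbf{p}^{min}(\textbf{r}^d)\le\textbf{p}^{min}(\textbf{r})$,
\begin{equation}
p_l^{min}(\textbf{r}^d)=\gamma_l^d\,\frac{N_0+\sum_{u\ne l}p_u^{min}(\textbf{r}^d)\,g_{ul}}{g_{ll}}\;\le\;\frac{\gamma_l^d}{\gamma_l}\,p_l^{min}(\textbf{r})
\end{equation}
for each $l$, whence $t_l^d\,p_l^{min}(\textbf{r}^d)\le\frac{R_l}{r_l^d}\cdot\frac{\gamma_l^d}{\gamma_l}\,p_l^{min}(\textbf{r})$. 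It then suffices to show $\gamma_l^d/r_l^d\le\gamma_l/r_l$, i.e. that $\gamma\mapsto\gamma/f(\gamma)$ is nondecreasing, and this is exactly where the concavity assumption $\partial^2 f/\partial\gamma^2\le 0$ of Section~\ref{system_model} is used: setting $h(\gamma)=\gamma f'(\gamma)-f(\gamma)$ we get $h(0)=-f(0)\le 0$ and $h'(\gamma)=\gamma f''(\gamma)\le 0$, so $h\le 0$ and $(f(\gamma)/\gamma)'=h(\gamma)/\gamma^2\le 0$. Hence $t_l^d\,p_l^{min}(\textbf{r}^d)\le\frac{R_l}{r_l}\,p_l^{min}(\textbf{r})=t_l\,p_l^{min}(\textbf{r})\le e_l$, so $\textbf{r}^d$ satisfies the energy constraint as well and is feasible, which contradicts its assumed infeasibility and completes the argument.
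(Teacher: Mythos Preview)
Your argument is correct and follows essentially the same route as the paper: both establish that the component-wise minimum power vector is monotone in the SINR targets (so the Perron--Frobenius and maximum-power constraints transfer immediately) and both invoke the concavity assumption on $f$ to handle the energy constraint. You frame it as the contrapositive and make the energy step more explicit---deriving $p_l^{min}(\textbf{r}^d)\le(\gamma_l^d/\gamma_l)\,p_l^{min}(\textbf{r})$ from the fixed-point equation and then using concavity to show $\gamma/f(\gamma)$ is nondecreasing---whereas the paper argues the direct implication by cases and states the corresponding power-ratio bound more tersely.
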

\begin{proof}
Let us denote the component-wise minimum power vectors corresponding to the rate vectors $\textbf{r}$ and $\textbf{r}^d$ by $\textbf{p} = (p_1, p_2, ..., p_{|S|})$ and $\textbf{p}^d = (p_1^d, p_2^d, ..., p_{|S|}^d)$, respectively. We first show that $p_i^d \leq p_i$ for all $i \in [1,|S|]$. Denote the SINR threshold levels corresponding to the rates $r_i$ and $r_i^d$ by $\gamma_i$ and $\gamma_i^d$, respectively. Since higher rate corresponds to a higher SINR threshold value and $r_i^d \leq r_i$, $\gamma_i^d \leq \gamma_i$. Therefore, any power vector satisfying the SINR requirements for the SINR threshold levels $\gamma_i^d, i \in [1, |S|]$ also satisfies these requirements for $\gamma_i, i \in [1, |S|]$. Suppose that $p_k^d > p^k$ for an arbitrary link $k$, then $\textbf{p}^d$ cannot be component-wise minimum power vector. This is a contradiction. 

If $\textbf{r}^d$ is infeasible due to the maximum power constraint given in Eq. (\ref{eq:pmax}), then there exists at least one link $k$ for which $p_k^d > p_{max}$. Since $p_k \geq p_k^d$, $p_k > p_{max}$. Hence, $\textbf{r}$ does not satisfy maximum power constraint either.

If $\textbf{r}^d$ is infeasible due to the energy constraint given in Eq. (\ref{eq:ei}), then there exists at least one link $k$ for which $t_k^d p_k^d> e_k$. Let $\alpha$ denote the ratio of $r_k$ and $r_k^d$. Since SINR-rate mapping function is an increasing function with a non-negative second derivative as described in Section \ref{system_model}, the transmission power $p_k$ will be greater than $\alpha p_k^d$. Therefore, $t_k p_k  > \frac{R_k}{ \alpha r_k^d} \alpha p_k^d  = t_k^d p_k^d > e_k$ so $r$ is infeasible due to the energy constraint.

\end{proof}

\label{lemma1}
\begin{lemma}
Let $t^d$ and $t$ are the transmission times corresponding to the rate index vectors $\textbf{r}^d$ and $\textbf{r}$, respectively. Then, $t^d \geq t$ is always true.
\end{lemma}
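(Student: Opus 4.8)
The plan is to compare the time slots componentwise. Recall that $t = \max_{i\in[1,|S|]}\frac{R_i}{r_i}$ and $t^d = \max_{i\in[1,|S|]}\frac{R_i}{r_i^d}$, where the maxima are taken only when the corresponding rate vectors are feasible. The key observation is the defining property of a descendant: $r_i^d \leq r_i$ for every $i\in[1,|S|]$. First I would note that since packet lengths $R_i$ are positive, the function $r\mapsto R_i/r$ is strictly decreasing, so $r_i^d \leq r_i$ immediately gives $\frac{R_i}{r_i^d} \geq \frac{R_i}{r_i}$ for each individual link $i$.

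Next I would take the maximum over $i$ on both sides. Since each term on the left dominates the corresponding term on the right, the maximum of the left-hand terms dominates the maximum of the right-hand terms; that is,
\begin{equation}
t^d = \max_{i\in[1,|S|]}\frac{R_i}{r_i^d} \;\geq\; \max_{i\in[1,|S|]}\frac{R_i}{r_i} = t,
\end{equation}
which is the claim.

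The only subtlety — and the single point that needs care rather than being purely routine — is the feasibility bookkeeping. The quantities $t$ and $t^d$ are only defined (via the stated $\max$ formula) when $\textbf{r}$ and $\textbf{r}^d$ are feasible. By Lemma~1, if $\textbf{r}^d$ is infeasible then $\textbf{r}$ is infeasible as well, so the interesting case is exactly when both are feasible, and in that case the argument above applies verbatim; alternatively one may adopt the convention that an infeasible rate vector has time slot $+\infty$, under which the inequality $t^d \geq t$ holds trivially whenever $\textbf{r}^d$ is infeasible. Either way there is no genuine obstacle here: the lemma is essentially the monotonicity of $R_i/r_i$ in $r_i$ combined with monotonicity of the $\max$ operator, and I expect the author's proof to be just a couple of lines along these lines.
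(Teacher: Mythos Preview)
Your proposal is correct and matches the paper's proof essentially line for line: the paper simply invokes $r_i^d \leq r_i$ for all $i$, uses the monotonicity of $R_i/r_i$, and concludes $t = \max_i R_i/r_i \leq \max_i R_i/r_i^d = t^d$. Your feasibility discussion is a reasonable extra remark, but the paper does not address it and treats the inequality purely as the componentwise monotonicity you identified.
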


\begin{proof}
We know that, $t = max_{i\in[1,|S|]}t_i = max_{i\in[1,|S|]}\frac{R_i}{r_i}$. Since we are considering the case where $r_i^d \leq r_i$ for all  ${i\in[1,|S|]}$, we can simply show that $t = max_{i\in[1,|S|]}\frac{R_i}{r_i} \leq max_{i\in[1,|S|]}\frac{R_i}{r_i^d} = t^d$. Hence, $t^d \geq t$.
\end{proof}
%
\label{lemma2}
\begin{lemma}
Suppose that node $j$ has the maximum time slot length among concurrently transmitting $|S|$ nodes, i.e. $j = argmax_{i\in[1,|S|]}\frac{R_i}{r_i}$. Then, increasing the transmission rate of any node $k\neq j$, where ${k, j \in[1,|S|]}$, does not decrease the time slot length $t$.
\end{lemma}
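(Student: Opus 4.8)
The plan is to exploit the fact that node $j$ is the bottleneck: its transmission time $\frac{R_j}{r_j}$ already equals the time slot length $t$, and raising the rate of some other node $k\neq j$ does not touch $r_j$, so the maximum over all nodes cannot drop below $\frac{R_j}{r_j}$. First I would fix notation: let $\textbf{r}=(r_1,\dots,r_{|S|})$ be the feasible rate vector with $t=\max_{i\in[1,|S|]}\frac{R_i}{r_i}=\frac{R_j}{r_j}$, and let $\textbf{r}'$ be obtained from $\textbf{r}$ by replacing the single component $r_k$ by some $r_k'\ge r_k$ (with $k\neq j$), leaving $r_i'=r_i$ for every $i\neq k$; in particular $r_j'=r_j$.

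Next I would dispose of the degenerate case. If $\textbf{r}'$ is infeasible, then there is no power/rate allocation realizing the concurrent transmission at rate vector $\textbf{r}'$, so the associated time slot length is not finite and \emph{a fortiori} is not smaller than $t$; the claim then holds trivially. Hence I may assume $\textbf{r}'$ is feasible and let $t'=\max_{i\in[1,|S|]}\frac{R_i}{r_i'}$ be the resulting time slot length.

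The core step is then immediate: because $r_j'=r_j$, node $j$ still requires transmission time $\frac{R_j}{r_j'}=\frac{R_j}{r_j}=t$ under $\textbf{r}'$, and since $t'$ is the maximum of $\frac{R_i}{r_i'}$ over all $i\in[1,|S|]$, we conclude $t'\ge \frac{R_j}{r_j'}=t$. Thus increasing the transmission rate of any node $k\neq j$ cannot decrease the time slot length. It is worth remarking that a larger $r_k$ may force a larger transmit power for node $k$ and hence more interference at node $j$; but since $r_j$ is held fixed, this only affects whether $\textbf{r}'$ remains feasible, not the value $\frac{R_j}{r_j}$ that lower-bounds $t'$.

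There is essentially no hard part here: the only points requiring care are that modifying node $k$'s rate genuinely leaves $r_j$ --- and therefore $\frac{R_j}{r_j}$ --- unchanged, and that the possibly-infeasible outcome is handled separately as above. In particular, no appeal to the Perron-Frobenius feasibility test or to the concavity of the SINR-rate mapping is needed for this lemma, in contrast to the two preceding lemmas.
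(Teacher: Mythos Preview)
Your proof is correct and follows essentially the same approach as the paper: both arguments observe that since $r_j$ is untouched, the new time slot length is bounded below by $\frac{R_j}{r_j}=t$. Your treatment is slightly more careful in explicitly handling the infeasible case, but the core argument is identical.
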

\begin{proof}
$t=max_{i\in[1,|S|]}\frac{R_i}{r_i}=\frac{R_j}{r_j} \geq \frac{R_k}{r_k}$  $\forall{k\neq j}$, where ${k, j \in[1,|S|]}$. Suppose that, for an arbitrary link $k \neq j$, we increase $r_k$ to ${r_k}^\prime$. Then $\frac{R_k}{{r_k}^{\prime}} \leq \frac{R_k}{r_k} \leq \frac{R_j}{r_j} = t$. Hence, increasing the transmission rate of any node $k \neq j$ does not decrease $t$.
\end{proof}

\begin{theorem}
Algorithm stated in Section IV gives the optimal result.
\end{theorem}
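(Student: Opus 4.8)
The plan is to recast the optimization as a purely combinatorial statement about rate vectors. Once a rate vector $\textbf{r}=(r_1,\dots,r_{|S|})$ is fixed, all SINR thresholds are fixed, so by the Perron--Frobenius test either no power vector meets all SINR constraints, or there is a unique component-wise minimum one $\textbf{p}^{min}$ against which (\ref{eq:ei}) and (\ref{eq:pmax}) are checked, while (\ref{eq:di}) is simply $R_i/r_i\le d_i$; call $\textbf{r}$ \emph{feasible} when all of this holds. The problem is then exactly: minimize $\max_i R_i/r_i$ over feasible $\textbf{r}$; write $t^*$ for this optimum, with $t^*=\infty$ if no feasible vector exists. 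Reading the pseudocode, the algorithm returns $\infty$ precisely when the initial vector $\textbf{r}_0=(r_1^{th},\dots,r_{|S|}^{th})$ is infeasible, and otherwise returns the last feasible vector encountered on the ``raise the bottleneck by one level'' path together with its time slot $\max_i R_i/r_i$; that returned vector is always feasible, since the loop is left either by the \textbf{break} statement (which leaves $\textbf{r}^{opt}$ unchanged and feasible) or because a raise produced an infeasible vector, in which case the correction at Lines~12--13 undoes exactly that raise. So it suffices to prove (i) the $\infty$ answer is correct, and (ii) in the feasible case the returned value equals $t^*$.

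For (i): every feasible rate vector satisfies the delay constraint and therefore dominates $\textbf{r}_0$ component-wise, because $r_i^{th}$ is by definition the smallest rate level meeting node $i$'s delay (assuming each $r_i^{th}$ exists — otherwise node $i$ cannot meet its deadline and the instance is infeasible). If $\textbf{r}_0$ is infeasible, then, since it is a descendant of (or equal to) any putative feasible vector and both satisfy (\ref{eq:di}), Lemma~1 forces that vector to be infeasible too, a contradiction; hence no feasible vector exists. As preparation for (ii), I would take any optimal feasible vector and use Lemma~1 to replace it by the \emph{minimal} optimal vector $\textbf{r}^*$: lower each coordinate to the smallest rate level $r_i^*$ with $R_i/r_i^*\le\min(t^*,d_i)$. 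This preserves (\ref{eq:di}) by construction and, since $\textbf{r}^*$ is a descendant of (or equal to) a feasible vector with both satisfying the delay constraint, preserves feasibility by Lemma~1; it does not raise the objective, so $\textbf{r}^*$ is still optimal, and clearly $\textbf{r}_0\le\textbf{r}^*$ component-wise.

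For (ii) I would maintain the invariant that, as long as the current time slot $t_{\mathrm{cur}}=\max_i R_i/r_i^{opt}$ exceeds $t^*$, the current $\textbf{r}^{opt}$ still satisfies $\textbf{r}^{opt}\le\textbf{r}^*$ component-wise. It holds at $\textbf{r}_0$; for the step, let $j$ be the bottleneck node, so $R_j/r_j^{opt}=t_{\mathrm{cur}}>t^*\ge R_j/r_j^*$, whence $r_j^{opt}<r_j^*$ (hence raising $r_j^{opt}$ by one level keeps it $\le r_j^*$). Consequently $r_j^{opt}<r^Q$ (so the algorithm raises rather than \textbf{break}s), the raised vector still satisfies $\textbf{r}^{opt}\le\textbf{r}^*$ and the delay constraint, and being a descendant of (or equal to) the feasible $\textbf{r}^*$ it is feasible by Lemma~1, so the loop continues. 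Thus the algorithm cannot stop while $t_{\mathrm{cur}}>t^*$, and it cannot run forever there, since each step raises the sum of the rate-level indices of $\textbf{r}^{opt}$ by one while the invariant caps that sum by the corresponding sum for $\textbf{r}^*$; and since Lemma~2 gives $t_{\mathrm{cur}}\ge t^*$ whenever $\textbf{r}^{opt}\le\textbf{r}^*$, the algorithm must reach an iteration with $t_{\mathrm{cur}}=t^*$ and record $t^{opt}=t^*$ there. From that point on the recorded $t^{opt}$ is non-increasing — raising the bottleneck only lowers the maximum or leaves it fixed, and by Lemma~3 raising any other node cannot help — while each recorded value is the objective of a feasible vector and hence $\ge t^*$ by optimality; so it stays equal to $t^*$, and the last recorded value, which is exactly what the algorithm returns (the correction at Lines~12--13 only reverts to the last feasible vector, whose recorded objective is that value), equals $t^*$.

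The hard part will be the invariant argument in (ii): one must pick the right optimal vector to track — the minimal one $\textbf{r}^*$, whose existence and optimality rely on Lemma~1 — and then verify, on the discrete rate grid, that the greedy raise is always ``covered'' by $\textbf{r}^*$, i.e. that the bottleneck coordinate lies strictly below $r_j^*$ whenever $t_{\mathrm{cur}}>t^*$, which is precisely what drives both the continued feasibility (via Lemma~1) and the termination count. Everything else — the $\infty$ case, the monotonicity of $t^{opt}$, and the bookkeeping that the returned vector is feasible and attains the returned value — should be routine once the three lemmas are in hand.
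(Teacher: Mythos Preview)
Your proposal is correct and more careful than the paper's own argument, but it takes a genuinely different route. The paper argues by contradiction: assuming a strictly better feasible vector $\textbf{r}^{**}$ exists, it notes that $\textbf{r}^{opt,+}$ (the vector obtained by raising the final bottleneck $j$ one more level) is infeasible, so by Lemma~1 some coordinate $k\neq j$ of $\textbf{r}^{**}$ must lie strictly below $r_k^{opt}$; but then, tracing the algorithm's history, the step at which $r_k$ was raised past $r_k^{**}$ had $k$ as bottleneck, forcing $t^{**}\ge R_k/r_k^{**}\ge t^{opt}$, a contradiction. Your approach is instead a forward invariant analysis: you first manufacture a \emph{minimal} optimal vector $\textbf{r}^*$ (via the contrapositive of Lemma~1), then show the algorithm's iterate stays componentwise below $\textbf{r}^*$ as long as $t_{\mathrm{cur}}>t^*$, which guarantees continued feasibility and progress until $t^*$ is hit, after which monotonicity and optimality pin $t^{opt}$ at $t^*$.

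What each buys: the paper's contradiction is shorter once one sees the ``trace back to when $r_k$ was raised'' trick, but it leaves the $t^{opt}=\infty$ branch and the $r_j^{opt}=r^Q$ termination implicit. Your invariant argument is longer but self-contained, handles both termination modes and the infeasible case explicitly, and makes transparent \emph{why} the greedy raise is safe (it is always covered by $\textbf{r}^*$). The key extra idea you need that the paper does not is the construction of the minimal optimal $\textbf{r}^*$; conversely, you avoid the paper's backward trace through the algorithm's history.
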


\label{theorem proof}
\begin{proof}
Suppose that LTTF returns the $\textbf{r}^{opt}$ as an optimal solution and the $j$-th link has the maximum transmit time among all the sensor nodes. Now, consider that $\textbf{r}^{opt}$ is not an optimal solution. This means that there exists another optimal rate index vector $\textbf{r}^{**}$ such that $t^{**} \leq t^{opt}$. 

Due to Lemma 2, the transmission times corresponding to all the descendants of $\textbf{r}^{opt}$ are greater or equal to that corresponding to $\textbf{r}^{opt}$ itself, i.e. $t^d \geq t^{opt}$.  This means that $\textbf{r}^{**}$ cannot be a descendant of $\textbf{r}^{opt}$. Also due to Lemma 3, increasing the rate value of any other node than $j$ cannot decrease $t^{opt}$. 

We now introduce the new rate index vector $\textbf{r}^{opt,+}$ in which only $r_j$ is increased by one level, i.e. $\textbf{r}^{opt,+} = (r_1^{opt}, r_2^{opt}, ..., r_j^{opt,+}, ... , r_{|S|}^{opt})$. Since LTTF returns $\textbf{r}^{opt}$, we know that $\textbf{r}^{opt,+}$ is infeasible. Due to Lemma  1, for at least one sensor $k \neq j$, $r_k^{**} < r_k^{opt}$ must hold for $\textbf{r}^{**}$ to be feasible. We also know that at each iteration of the LTTF algorithm, the rate value of the sensor node with the maximum transmission time is increased by one level. Therefore at some iteration of the algorithm, the $k$-th node is evaluated as $k=argmax\frac{R_i}{r_i}$  $\forall{i} \in [1, |S|]$ and $r_k^{**}$ increased by one level. It follows that, $t^{**} > t_j^{opt} = t^{opt}$, which is a contradiction.

\end{proof}

\section{Simulation Results} \label{simulationresults}


Simulations are performed in MATLAB. The sensor nodes and the controller are uniformly distributed within a square area. The figures show the average of 100 independent random network topologies. The packet generation period and the packet transmission requirement of each node are randomly chosen from the sets $[1, 2, 4, 8]$ and $[50, 100]$, respectively. Number of controllers is fixed and equal to $3$. 
The continuous rate model adopts Shannon's channel capacity formulation for an AWGN wireless channel in the calculation of the maximum achievable rate as a function of SINR and is denoted by \emph{cont} in the legends of the figures. Discrete rate model adopts $4$ discrete rate levels corresponding to 4 SINR levels equal to $[-\infty, 10, 20, 30]$ dB and $8$ discrete rate levels corresponding to $8$ SINR levels equal to $[-\infty, 0, 5, 10, 15, 20, 25, 30]$ dB  calculated by using Shannon's capacity formulation, and denoted by \emph{disc4} and \emph{disc8}, respectively, in the legends of the figures. The channel attenuation is calculated by using Rayleigh fading with scale parameter set to the mean power level determined by using the large scale statistics modeled as
\begin{equation} \label{eq:pathloss}
PL(d)=PL(d_0)+10 \alpha \log (d/d_0)+ Z,
\end{equation}
where $d$ is the distance between the node and the controller, $PL(d)$ is the path loss at distance $d$, $PL(d_0)$ is the path loss at reference distance $d_0$, $Z$ is the Gaussian random variable with zero mean and standard deviation $\sigma_z$. All the simulation parameters are given in Table \ref{table:parameters}.

\begin{table}[h] 
\centering
\caption{\small Simulation Parameters}
\label{table:parameters}
\begin{tabular}{||c|c||c|c||} 
  \hline                       
  $PL(d_0)$  & $70$~dB & $p_{max}$ & $250$~mW  \\ \hline
  $\alpha$  &  $3.5$ & $\sigma_z$ & $4$~dB\\ \hline
    $N_0$ & $10^{-8}$~W/Hz &  $W$ & $100$~MHz\\ \hline
 \end{tabular}
\end{table}

Fig. \ref{fig:numOfNodes} shows the normalized maximum total active length of SNA-MLA and SNA-MUA scheduling algorithms for continuous and discrete rate transmission models for different number of nodes at $5$ nodes$/m^2$ density. The normalized maximum total active length is defined as the ratio of the maximum total active length of the scheduling algorithm to that of the optimal solution for the continuous rate transmission model. The reason for normalizing the maximum total active length of even the discrete rate by that of the continuous rate transmission is to observe the performance of the discrete rate model compared to the continuous rate model. We observe that the discrete rate model with $8$ levels performs very close to the continuous rate model. Moreover, as the number of the rate levels increases, the performance of the discrete rate model gets closer to that of the continuous rate model. Furthermore, as the number of nodes increases, the normalized schedule length increases, mainly due to the increasing number of the combinations of the links for concurrent transmissions. SNA-MLA performs better than SNA-MUA since it checks an exponential number of link combinations.

\begin{figure}[h] 
  \begin{center} 
    \includegraphics*[scale=0.5,viewport=370 120 920 550]{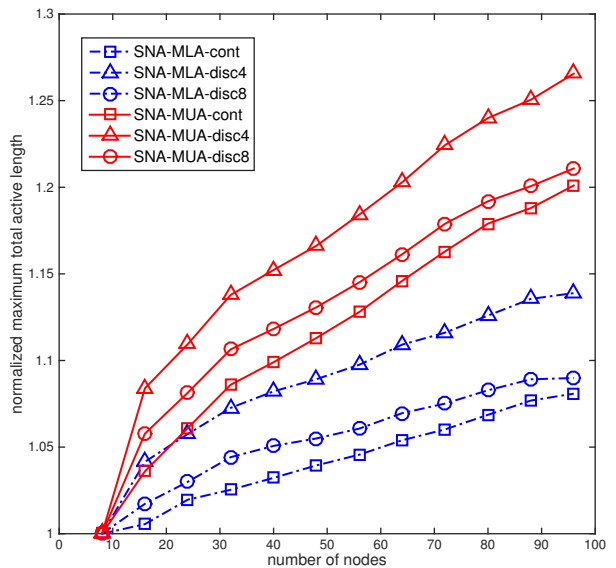}
  \end{center}  
  \caption{\small Normalized maximum total active length of SNA-MLA and SNA-MUA scheduling algorithms for continuous and discrete rate transmission models for different number of nodes at $5$ nodes$/m^2$ density.}  
  \label{fig:numOfNodes}
\end{figure}

Fig. \ref{fig:networkDensity} shows the normalized maximum total active length of SNA-MLA and SNA-MUA scheduling algorithms for continuous and discrete rate transmission models at different network densities in a network of 100 nodes. Similarly, as the number of the rate levels increases, the performance of the discrete rate model gets closer to that of the continuous rate model. The
difference between the performance of these algorithms decreases at low and high node densities. This is
mainly due to the decreasing number of the combinations of the links for concurrent transmissions: At
low node densities, the number of link combinations suitable for concurrent transmission is limited since
most of the nodes are suitable for concurrent transmission, while causing only slight interference, being
separated enough from each other. At high node densities, the number of node subsets that are separated
enough from each other for concurrent transmission is limited due to the massive interference caused
among each other.

\begin{figure}[h] 
\begin{center} 
    \includegraphics*[scale=0.55,viewport=380 150 970 550]{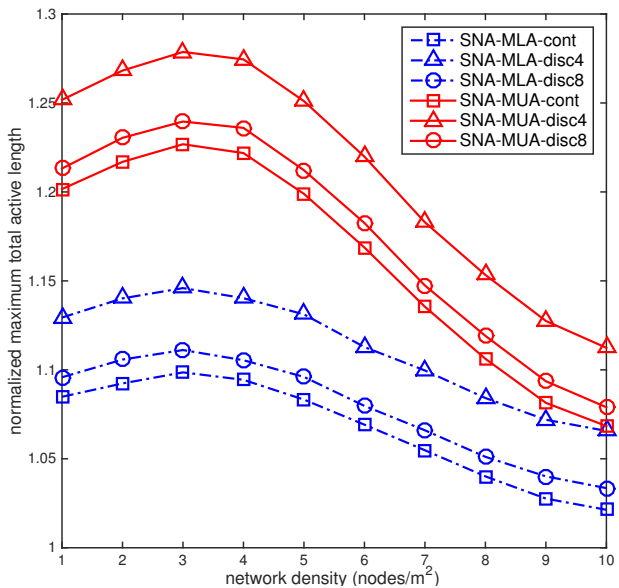}
  \end{center}  
  \caption{\small Normalized maximum total active length of SNA-MLA and SNA-MUA scheduling algorithms for continuous and discrete rate transmission models at different network densities in a network of $100$ nodes.}  
  \label{fig:networkDensity}
\end{figure}

\section{Conclusion} \label{sec:conclusion}

In this paper, we propose a polynomial time power control and rate adaption algorithm for ultra-reliable M2M control applications employing discrete
rate transmission model. The algorithm is based on the initialization with the vector of the lowest possible rates satisfying the delay
constraints of the nodes and then checking the feasibility of the lowest possible number of rate allocation
vectors for satisfying energy, maximum power and SINR constraints. We proved the optimality of the proposed algorithm. The extensive simulations of the algorithm in combination with the previously proposed scheduling algorithms demonstrate that the performance of the discrete rate transmission model is very close to that of the continuous rate model with high enough discrete rate levels, and robust to the changes in the network size and density.  
In the future, we are planning to extend this framework for multi-hop M2M applications and next generation
cellular networks.

\bibliographystyle{IEEEtran}
\bibliography{conf_ref}

\end{document}